\documentclass[english,11pt]{article}
\usepackage[english]{babel}
\usepackage{latexsym,amssymb,amsmath,amsfonts,amscd,epsfig,amsthm,color}
\usepackage{graphicx}
\usepackage{float}
\usepackage[left=1in,right=1in,top=1in, bottom=1in]{geometry}
\usepackage{ifthen}
\usepackage{authblk}
\usepackage{colonequals}
\usepackage{array}

\newboolean{ElectronicVersion}
\setboolean{ElectronicVersion}{true}

\ifthenelse{\boolean{ElectronicVersion}}{
    \usepackage[letterpaper=true,pdftex,bookmarks,pagebackref,
	plainpages=false, 
        pdfpagelabels=true 
        ]{hyperref}}{}

\makeatletter
\newtheorem*{rep@theorem}{\rep@title}
\newcommand{\newreptheorem}[2]{%
\newenvironment{rep#1}[1]{%
 \def\rep@title{#2 \ref*{##1}}%
 \begin{rep@theorem}}%
 {\end{rep@theorem}}}
\makeatother

\newtheorem{theorem}{Theorem}[section]
\newtheorem{lemma}[theorem]{Lemma}
\newtheorem{corollary}[theorem]{Corollary}
\newtheorem{proposition}[theorem]{Proposition}

\newreptheorem{theorem}{Theorem}
\newreptheorem{lemma}{Lemma}

\def\ket#1{{\lvert}#1\rangle}
\def\field{\mathbb{F}_2}
\def\bra#1{{\langle}#1\rvert}

\def\ceil#1{{\lceil}#1\rceil}
\def\floor#1{{\lfloor}#1\rfloor}

\DeclareMathOperator{\Tr}{Tr}

\renewcommand{\(}{\left(}
\renewcommand{\)}{\right)}

\newcommand{\comproblem}[5]{
\noindent\begin{minipage}{\textwidth}
\vspace{2pt}
\hrule
\vspace{2pt}
\noindent\textbf{#1}, #2
\vspace{-7pt}
\begin{quotation}
\noindent\text{Alice's input:} #3\\
\text{Bob's input:} #4\\
\text{Output:} #5
\end{quotation}
\vspace{-5pt}
\hrule
\vspace{2pt}
\end{minipage}
}

\newcommand{\compromproblem}[6]{
\noindent\begin{minipage}{\textwidth}
\vspace{2pt}
\hrule
\vspace{2pt}
\noindent\textbf{#1}, #2
\vspace{-7pt}
\begin{quotation}
\noindent\text{Alice's input:} #3\\
\text{Bob's input:} #4\\
\text{Promise:} #5\\
\text{Output:} #6
\end{quotation}
\vspace{-5pt}
\hrule
\vspace{2pt}
\end{minipage}
}

\title{Quantum Communication Complexity of Distributed Set Joins}

\author[1]{Stacey Jeffery}
\author[2]{Fran\c{c}ois Le Gall}
\affil[1]{Institute for Quantum Information and Matter, California Institute of Technology, Pasadena, USA\\
  \texttt{sjeffery@caltech.edu}}
\affil[2]{Graduate School of Informatics, Kyoto University, Kyoto, Japan\\
  \texttt{legall@i.kyoto-u.ac.jp}}

\begin{document}

\maketitle

\begin{abstract}
Computing set joins of two inputs is a common task in database theory. Recently, Van Gucht, Williams, Woodruff and Zhang [PODS 2015] considered the complexity of such problems in the natural model of (classical) two-party communication complexity and obtained tight bounds for the complexity of several important distributed set joins.

In this paper we initiate the study of the \emph{quantum} communication complexity of distributed set joins. We design a quantum protocol for distributed Boolean matrix multiplication, which corresponds to computing the \emph{composition join} of two databases, showing that the product of two $n\times n$ Boolean matrices, each owned by one of two respective parties, can be computed with $\widetilde{O}(\sqrt{n}\ell^{3/4})$ qubits of communication, where $\ell$ denotes the number of non-zero entries of the product. Since Van Gucht et al.\ showed that the classical communication complexity of this problem is $\widetilde{\Theta}(n\sqrt{\ell})$, our quantum algorithm outperforms classical protocols whenever the output matrix is sparse. We also show a quantum lower bound and a matching classical upper bound on the communication complexity of distributed matrix multiplication over $\mathbb{F}_2$.

Besides their applications to database theory, the communication complexity of set joins is interesting due to its connections to direct product theorems in communication complexity. In this work we also introduce a notion of \emph{all-pairs} product theorem, and relate this notion to standard direct product theorems in communication complexity.
\end{abstract}

\section{Introduction}
\paragraph*{Background}
Set joins are basic operations in relational database theory. The notion of set join was introduced to the database community more than forty years ago by Codd \cite{Codd70} to express operations combining  two tables in relational databases. This seminal paper considered, in particular, the \emph{composition join}: given two (relational) databases $A$ and~$B$, $A$ represented as a subset of $\{1,\ldots,m\}\times \{1,\ldots,n\}$ and $B$ as a subset of $\{1,\ldots,n\}\times \{1,\ldots,m\}$, the composition join of $A$ and $B$ is the set $\{(i,j)\:|\:\exists k : (i,k)\in A \textrm{ and } (k,j)\in B\}\subseteq \{1,\ldots,m\}\times \{1,\ldots,m\}$. 
Many other join operations have been defined so far and have found many applications (see, e.g., \cite{Arasu+06,Codd70,Helmer+97,Leinders+JCSS07,Mamoulis03,Melnik+03,Ramasamy+00,VanGucht+PODS15}).
 
The computational complexity of join operations is naturally an important issue. Very recently Van Gucht, Williams, Woodruff and Zhang \cite{VanGucht+PODS15} have investigated this question in the two-party communication complexity model where one party owns the first database, the second party owns the second database, and both parties collaborate to compute the join of these two databases using as little communication as possible. This model is interesting for two main reasons. First, it models the natural and practical task of distributed computation of join operations. Second, in the communication complexity setting it is possible to show strong lower bounds on the complexity of problems. Indeed, one of the main contributions of~\cite{VanGucht+PODS15} was to show quantitative differences between the (communication) complexities of various join operations.

Many join operations studied in database theory actually correspond to fundamental and well-studied computational tasks in other areas of computer science. The composition join mentioned above, in particular, corresponds to Boolean matrix multiplication, one central problems in theoretical computer science: if we represent the database $A$ by an $m\times n$ matrix $M_A$ and $B$ by an $n\times m$ matrix $M_B$ (such that $M_A[i,j]=1$ if and only if $(i,j)\in A$, and similarly for $M_B$), the matrix representation of the composition join of $A$ and $B$ is precisely the output of the Boolean matrix multiplication of $M_A$ and $M_B$ ({i.e.}, the $m\times m$ matrix~$C$ such that $C[i,j]=\bigvee_{k=1}^n M_A[i,k]\land M_B[k,j]$). The result by Van Gucht et al.\ on the communication complexity of the composition join \cite{VanGucht+PODS15} shows that the communication complexity of Boolean matrix multiplication is $\widetilde \Theta(n\sqrt{\ell})$ for the square case $m=n$ (a more complicated formula is also given for the rectangular case), where~$\ell$ denotes the number of non-zero entries in the product $C$. Since the parameter~$\ell$ represents the sparsity of the output matrix, algorithms and communication protocols with complexity depending explicitly on~$\ell$ are sometimes called output-sensitive and have been studied in several settings other than communication complexity as well \cite{Amossen+09,Buhrman+SODA06,JKLM15,Lingas11}.

\paragraph*{Our Results}

In this paper we initiate the study of the \emph{quantum} communication complexity of distributed set joins. Our main result is about the set joins related to matrix multiplication. We first show that the quantum communication complexity of the composition join ({i.e.}, Boolean matrix multiplication) is $O(\sqrt{n}\ell^{3/4}\log m)$ (Theorem \ref{thm:bmm-protocol}). This is better than the best possible classical protocol, which costs $\Omega(n\sqrt{\ell})$ as mentioned above. We also consider matrix multiplication over the binary field and show that its quantum communication complexity (and actually even its classical communication complexity) is $\widetilde O(n\sqrt{\ell})$ (Theorem \ref{theorem:upperbounds-fields}). We give a matching lower bound as well (Theorem \ref{th:MM-lower}). 

These bounds are also interesting since they confirm and substantiate our current understanding of the power of quantum algorithms for problems related to matrix multiplication. Indeed, while matrix multiplication over a field seems harder than Boolean matrix multiplication for quantum computers, we currently do not have any technique to prove such a statement in the time complexity setting. Our results prove this statement in the communication complexity setting, for instances with sparse output matrices. 

In addition to these concrete results, this work presents several interesting new open problems. An \emph{OR lemma} is a composition lemma that says that the quantum communication complexity of the function $f^{\vee m}(a_1,\dots,a_m,b_1,\dots,b_m)=\bigvee_{i=1}^m f(a_i,b_i)$ is at least $\Omega(\sqrt{m}Q(f))$, where $Q(f)$ is the quantum communication complexity of $f$. We show that our upper bound for composition join is tight up to logarithmic factors assuming the problem of Boolean matrix multiplication satisfies an OR lemma (Proposition \ref{prop:bmm-or-lemma}). We give further evidence that our upper bound is indeed tight by showing that it is tight at extreme values of $\ell$, when $\ell=O(1)$ (Proposition \ref{prop:bmm-lower-bound1}) and when $\ell=\Omega(n^2)$ (Proposition~\ref{prop:bmm-lower-bound2}). 

We believe that proving lower bounds on set joins is a very interesting area of future research, as doing so may give insight into direct product theorems in communication complexity, as well as lower bounds in quantum query complexity for problems that involve \emph{read-many formulas}, in which different parts of the input are used multiple times, which makes it difficult to prove lower bounds using standard composition theorems.

\paragraph*{Organization}

The remainder of this paper is organized as follows. In Section \ref{sec:prelim}, we give the necessary preliminaries, including quantum communication complexity, and the groundwork for studying the quantum communication complexity of set joins. In Section \ref{sec:BMM}, we present our communication protocol for composition join. In Section \ref{sec:MM}, we present our classical communication protocol and matching quantum lower bound for matrix multiplication over $\mathbb{F}_2$. Finally, in Section~\ref{sec:lower}, we give some evidence that our upper bound for composition join is tight, by reducing a matching lower bound to a plausible OR lemma.

\section{Preliminaries}\label{sec:prelim}
\subsection{Notation}
Let $2^{[n]}$ denote the set of subsets of $[n]$. A subset $S$ of $[n]:=\{1,\dots,n\}$ can be represented by an $n$-bit string, and we will sometimes conflate these two notions. Let $S[i]$ denote the $i$-th bit of the string corresponding to $S$, so $S[i]=1$ if and only if $i\in S$. For any $x\in\{0,1\}^n$, we let $|x|$ denote the Hamming weight, which is the size of the corresponding subset of $[n]$. Similarly, for a Boolean matrix $A$ ({i.e.}, a matrix with entries in $\{0,1\}$), let $|A|$ denote the number of 1s in $A$.  Given an $m\times n$ Boolean matrix $A$ and an $n\times m$ Boolean matrix $B$, we write the Boolean product $A\ast B$ and let $AB$ denote their product over the finite field $\field=\{0,1\}$. Morever, for any integer $k\in[n]$ we let $A[\cdot,k]$ denote the $k$-th column of $A$ and $B[k,\cdot]$ the $k$-th row of $B$.

\subsection{Quantum Communication Complexity}

A communication problem is a function $f:A\times B\rightarrow Y$ whose input has two parts, $a\in A$, which we call Alice's input, and $b\in B$, which we call Bob's input. In the model of communication complexity, first defined by Yao \cite{Yao79}, Alice and Bob want to run a protocol such that, at the end of the protocol, Alice and Bob both output $f(a,b)$ with high probability, and they want to minimize the number of bits they need to communicate in order to achieve this. 

In the model of quantum communication, also introduced by Yao \cite{Yao93}, Alice and Bob are allowed a quantum communication channel and they want to minimize the number of quantum bits (qubits) they need to communicate in order to compute the function. More precisely, a quantum communication protocol consists of finite inner product spaces ${\cal X}$ and ${\cal Y}$, a measurement $\{\Pi,I-\Pi\}$ on $\cal Y$, and unitary operators $\{U_i\}_{i=1}^T$, such that for odd $i$, $U_i$ acts on ${\cal X}\otimes {\cal C}$, where ${\cal C}=\mathbb{C}^2$ is a single-qubit system, and for even $i$, $U_i$ acts on $\mathcal{C}\otimes {\cal Y}$. The protocol is said to have quantum communication complexity $T$. The protocol is said to compute $f$ with bounded error $1/3$ if for all $(a,b)\in A\times B$, there exist states $\rho_a$ and $\rho_b$ on ${\cal X}$ and ${\cal Y}$ respectively, such that 
\begin{equation*}
|\Tr\((I_{\cal X}\otimes I_{\mathcal{C}}\otimes \Pi)U_T\dots U_1(\rho_a\otimes \ket{0}\bra{0}\otimes \rho_b)\)  -f(a,b)|\leq 1/3.
\end{equation*}
That is, Alice begins the protocol in some state $\rho_a$ depending on her input, and Bob begins the protocol in some state $\rho_b$  depending on his input, and Alice also has an additional system, ${\cal C}$, initialized to $\ket{0}$, which will be used for communication with Bob. Alice applies $U_1$ to ${\cal X}\otimes \mathcal{C}$, and then sends ${\cal C}$ to Bob, who applies $U_2$ to ${\cal C}\otimes {\cal Y}$, before sending ${\cal C}$ back to Alice. They continue until they have applied all $T$ unitaries, at which point, Bob measures ${\cal Y}$, and the outcome determines $f(a,b)$ with error at most $1/3$. 

The \emph{bounded error quantum communication complexity} of $f$, denoted $Q(f)$, is the minimum~$T$ such that there exists a quantum communication protocol computing $f$ with bounded error $1/3$ with quantum communication complexity $T$. We will also consider the bounded error quantum communication complexity of partial functions $f:D\rightarrow \{0,1\}$ for $D\subseteq A\times B$.

There are many variants of this model, including the setting of \emph{one-way} communication complexity, in which Alice can send messages to Bob, but Bob cannot send messages to Alice, and only Bob is required to output the correct answer. We let $Q^1(f)$ denote the one-way communication complexity of $f$. 

An important problem in the study of quantum communication complexity is the problem of set disjointness, which is defined as follows.\vspace{2mm}

\comproblem{Set Disjointness}{$\mathsf{DISJ}_n$}
{$a\in\{0,1\}^n$}
{$b\in \{0,1\}^n$}
{$\mathsf{DISJ}_n(a,b)=\bigvee_{i=1}^na_ib_i$}

It is well known that $Q({\sf DISJ}_n)=\Theta(\sqrt n)$ \cite{BCW98,HW02,AA03,Raz03}, beating the classical communication complexity of $\Theta(n)$ \cite{KS92,Raz92}. When one of the two input sets is small, we can do even better as shown in the following elementary lemma. 

\begin{lemma}[Set disjointness for small sets]\label{lem:disj}
The bounded error quantum communication complexity of $\mathsf{DISJ}_n(a,b)$ is $O\(\sqrt{\frac{\min\{|a|,|b|\}}{|a\cap b|+1}}\log n\)$. Furthermore, if $\mathsf{DISJ}_n(a,b)=1$, then the protocol also returns a uniform random $i\in a\cap b$. 
\end{lemma}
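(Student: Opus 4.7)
The plan is to run a Grover-style search on Alice's set, using Bob's set as the oracle. First, Alice and Bob exchange the sizes $|a|$ and $|b|$ using $O(\log n)$ classical bits to identify the smaller one; assume without loss of generality that $|a|\le |b|$, and write $k=|a|$ and $K=|a\cap b|$. Alice then picks a uniformly random bijection $\pi:[k]\to a$ (kept as private randomness) and defines $f:[k]\to\{0,1\}$ by $f(j)=b[\pi(j)]$, so a marked input $j$ corresponds to an element $\pi(j)\in a\cap b$.

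Each query to $f$ costs $O(\log n)$ qubits of two-way communication: Alice maintains a $\lceil\log n\rceil$-qubit register holding $\pi(j)$ (coherently, via her private description of $\pi$), sends it together with an answer qubit to Bob, who XORs $b[\pi(j)]$ into the answer using his input, and sends both registers back; Alice uncomputes $\pi$ to restore her workspace. Applying the BBHT variant of Grover search (Boyer--Brassard--H{\o}yer--Tapp) to $f$ either finds a marked $j$ with high probability in $O\bigl(\sqrt{k/(K+1)}\bigr)$ oracle calls, or, after $O(\sqrt{k})$ calls, certifies that $K=0$. Multiplying by the per-query cost yields $O\bigl(\sqrt{\min\{|a|,|b|\}/(|a\cap b|+1)}\log n\bigr)$ qubits overall, and standard amplification brings the error below~$1/3$.

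For the uniform-sampling claim, once $f$ is defined in terms of the random bijection $\pi$, the joint distribution of the marked subset $\pi^{-1}(a\cap b)\subseteq[k]$ is exchangeable (uniform over all $K$-subsets of $[k]$) and independent of the quantum algorithm's internal behaviour. Hence whichever marked index $j^\star$ BBHT returns, the element $\pi(j^\star)\in a\cap b$ it maps to is uniform by symmetry.

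The only subtle point is this uniform-sampling guarantee: BBHT's distribution over marked indices is not uniform in general, so the random permutation $\pi$ is essential—without it, one only obtains \emph{some} element of $a\cap b$, not a uniform one. Everything else is routine simulation of a standard quantum search in the communication model.
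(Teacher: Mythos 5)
Your proposal is correct and takes essentially the same approach as the paper: both protocols exchange $|a|$ and $|b|$ classically, then run Grover/BBHT search over the smaller party's set, with the other party implementing the membership oracle at a cost of $O(\log n)$ qubits per iteration, giving $O\bigl(\sqrt{\min\{|a|,|b|\}/(|a\cap b|+1)}\log n\bigr)$ total. The only divergence is your random bijection $\pi$ for the uniform-sampling claim, which is harmless but not actually needed: Grover/BBHT maintains equal amplitude on all marked indices throughout the evolution, so the measured witness is already uniform over $a\cap b$ by symmetry, which is what the paper's proof implicitly relies on.
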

\begin{proof}
To begin the protocol, Alice sends Bob $|a|$ using $\ceil{\log_2 n}$ bits of communication, and Bob sends Alice $|b|$ using $\ceil{\log_2 n}$ bits of communication. If $|a|<|b|$, Alice and Bob perform Grover search on the set $S_A=\{i\in [n]:a_i=1\}$ for an index $i\in S_A$ such that $f_B(i)=1$, where $f_B(i)=b_i$. They do this as follows. Alice computes $\ket{\pi(S_A)}=\sum_{i\in S_A}\frac{1}{\sqrt{|a|}}\ket{i}$. In order to perform the search, Alice and Bob must alternate $R_A=2\ket{\pi(S_A)}\bra{\pi(S_A)}-I$ and $R_B=\sum_{i\in [n]}(-1)^{b_i}\ket{i}\bra{i}$, $O\(\sqrt{\frac{|a|}{|a\cap b|+1}}\)$ times. Clearly Alice can implement $R_A$, and Bob can implement $R_B$, so they can implement this algorithm using $O\(\sqrt{\frac{|a|}{|a\cap b|+1}}\)$ rounds of communication, each time communicating a $\ceil{\log_2 n}$-qubit state. Bob measures some $i\in [n]$, and sends $i,f_B(i)$ to Alice. Both Alice and Bob output $f_B(i)$. If $|a|\geq |b|$, they do the protocol obtained by reversing Alice and Bob's roles. 
\end{proof}

The algorithm in Lemma \ref{lem:disj} actually finds a witness $i\in a\cap b$, which is slightly stronger than what is required to solve $\sf DISJ$. We will also consider the problem of finding the entire intersection:\vspace{2mm}

\comproblem{Find-all Set Intersection}{$\mathsf{DISJall}_n$}
{$a\in\{0,1\}^n$}
{$b\in \{0,1\}^n$}
{$\mathsf{DISJall}_n(a,b)=\{i\in [n]:a_i=b_i=1\}$}

In this case, we also have an advantage when $a$ or $b$ is small, as shown in the following lemma.

\begin{lemma}[Find-all set intersection for small sets]\label{lem:disj-find-all}
The bounded error quantum communication complexity of $\mathsf{DISJall}_n(a,b)$ is $O(\sqrt{|a\cap b|\min\{|a|,|b|\}}\log n)$.
\end{lemma}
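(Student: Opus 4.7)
The plan is to extract the elements of $a\cap b$ one at a time by iterating Lemma~\ref{lem:disj}, exploiting the fact that each successful call strictly shrinks the intersection so subsequent calls are cheaper. Without loss of generality I will assume $|a|\leq|b|$; Alice and Bob exchange $\ceil{\log_2 n}$ bits at the start to determine this and swap roles if necessary. Let $t\defeq|a\cap b|$, and for $k\geq 0$ let $(a_k,b_k)$ denote the sets at the beginning of the $(k{+}1)$-st call, with $(a_0,b_0)=(a,b)$. At each step, Alice and Bob run the protocol of Lemma~\ref{lem:disj} on $(a_k,b_k)$. If it returns a witness $i\in a_k\cap b_k$, both parties learn $i$ by the ``furthermore'' clause of that lemma; each locally deletes it to form $(a_{k+1},b_{k+1})$, and the loop continues. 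If it instead reports disjointness, they halt and output the collected witnesses, whose union equals $a\cap b$.

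For the cost, entering iteration $k+1$ we have $|a_k\cap b_k|=t-k$ and $|a_k|\leq|a|$, so Lemma~\ref{lem:disj} contributes $O\bigl(\sqrt{|a|/(t-k+1)}\,\log n\bigr)$ qubits. Summing over $k=0,1,\ldots,t$ and using $\sum_{j=1}^{t+1}j^{-1/2}=O(\sqrt{t})$ gives total communication
\[
O(\log n)\cdot\sqrt{|a|}\sum_{j=1}^{t+1}\frac{1}{\sqrt{j}}\;=\;O\!\left(\sqrt{|a|\cdot t}\,\log n\right)\;=\;O\!\left(\sqrt{|a\cap b|\min\{|a|,|b|\}}\,\log n\right),
\]
which is exactly the claimed bound.

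The step I expect to be the main obstacle is controlling the accumulated error, since each of the $\Theta(t)$ calls to Lemma~\ref{lem:disj} has only constant bounded error and the algorithm must succeed in all of them, both for correctness and for the loop to stop at the right moment rather than prematurely declaring disjointness. The standard remedy is to boost each call to success probability $1-O(1/t)$ by $O(\log t)$ independent repetitions---declaring disjointness only if no repetition returns a witness, and otherwise keeping any witness found---and then apply a union bound to get constant overall error. This incurs a modest additional logarithmic factor that can be absorbed into the stated $\log n$ dependence.
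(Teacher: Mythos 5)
Your proposal is correct and follows essentially the same route as the paper's proof: repeatedly invoke the witness-finding protocol of Lemma~\ref{lem:disj}, delete each reported element from both inputs, and sum the per-round costs $O\(\sqrt{\min\{|a|,|b|\}/(t-k)}\log n\)$ over the shrinking intersection to get $O(\sqrt{t\min\{|a|,|b|\}}\log n)$. The only difference is that you explicitly address error accumulation, which the paper silently ignores; your amplification fix is sound, though strictly speaking the $O(\log t)$ repetitions give an extra multiplicative $\log t$ factor rather than being absorbed into the single $\log n$ already present.
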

\begin{proof}
Alice and Bob run the following protocol.

\begin{enumerate}
\item $S\leftarrow \emptyset$, $\tilde{a}\leftarrow a$, $\tilde{b}\leftarrow b$.
\item Repeat:
	\begin{enumerate}
	\item Use the protocol for $\mathsf{DISJ}_n(\tilde{a},\tilde{b})$ to obtain $i\in \tilde{a}\cap \tilde{b}$. If $\mathsf{DISJ}_n(\tilde{a},\tilde{b})=0$, output $S$.  
	\item $S\leftarrow S\cup \{i\}$, $\tilde{a}_i\leftarrow 0$, $\tilde{b}_i\leftarrow 0$.
	\end{enumerate}
\end{enumerate}

This protocol has communication complexity 
\begin{equation*}
\sum_{i=1}^{|a\cap b|}\sqrt{\frac{\min\{|a|,|b|\}}{|a\cap b|-i+1}}\log n=\Theta\(\sqrt{|a\cap b|\min\{|a|,|b|\}}\log n\)
\end{equation*}
qubits.\qedhere
\end{proof}

\subsection{Set Joins and Direct Product Theorems}

In this paper, we consider various \emph{set join} problems. For any predicate ${\cal P}_n:2^{[n]}\times 2^{[n]}\rightarrow \{0,1\}$, we can define a set join, as follows. \vspace{2mm}

\comproblem{$\cal P$-Set Join}{${\cal P}_{n}^{\otimes m}$}{${\cal A}=(A_1,\dots,A_m)$, $A_i\subseteq [n]$}{${\cal B}=(B_1,\dots,B_m)$, $B_i\subseteq [n]$}{$\{(i,j)\in [m]\times [m]:{\cal P}_n(A_i,B_j)=1\}$}

When ${\cal P}_n$ is the predicate such that ${\cal P}_n(A,B)=1$ if and only if $A\cap B\neq \emptyset$, the resulting join is called the \emph{composition join} or sometimes \emph{set-intersection-non-empty join}. As mentioned in the introduction, this join is equivalent to Boolean matrix multiplication, where we consider $A_1,\dots,A_m$ to be the rows of a matrix $A\in\{0,1\}^{m\times n}$, and $B_1,\dots,B_m$ to be the columns of a matrix $B\in \{0,1\}^{n\times m}$.

Consider a related construction: the direct product. \vspace{2mm}

\comproblem{Direct product}{${\cal P}_n^{(m)}$}{${\cal A}=(A_1,\dots,A_m)$, $A_i\subseteq[n]$}{${\cal B}=(B_1,\dots,B_m)$, $B_i\subseteq [n]$}{$\{i\in [m]:{\cal P}_n(A_i,B_i)\}$}

Unlike set joins, such problems are well-studied, and much is known. Clearly, we have $Q({\cal P}_n^{(m)})={O}(mQ({\cal P}_n)\log m)$ for any predicate ${\cal P}_n$. Intuitively, one can usually expect that the resources needed to solve $m$ instances of ${\cal P}_n$ scale as at least $m$ times the resources needed to solve one instance, that is: $Q({\cal P}_n^{(m)})=\Omega(mQ({\cal P}_n))$. This is called a \emph{(weak) direct product theorem} for ${\cal P}_n$. In fact, we can sometimes prove a stronger statement: that even solving ${\cal P}_n^{(m)}$ with success probability $2^{-m}$ requires $\Omega(mQ({\cal P}_n))$ quantum communication. Such a statement is called a \emph{strong direct product theorem}. Although such a statement likely holds for many problems in quantum communication complexity, it can be very difficult to prove (see, e.g., \cite{SherstovSICOMP12} and the references therein). 

In the case of set joins, it is also easy to see that $Q({\cal P}_n^{\otimes m})={O}(m^2Q({\cal P}_n)\log m)$, however, unlike the case of direct products, this naive upper bound is often not tight. For example, let $Q^1({\cal P}_n)$ denote the \emph{one-way communication complexity} of ${\cal P}_n$. Then we have the following:

\begin{theorem}
For any predicate ${\cal P}_n$, $Q({\cal P}_{n}^{\otimes m})\leq O(mQ^1({\cal P}_n)\log m)$.
\end{theorem}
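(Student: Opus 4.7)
The plan is to have Alice send, for each $i\in[m]$, a single \emph{amplified} one-way message encoding $A_i$, from which Bob extracts $\mathcal{P}_n(A_i,B_j)$ for every $j\in[m]$ by coherently running the one-way measurement and uncomputing. The classical analogue $R(\mathcal{P}_n^{\otimes m})\leq O(mR^1(\mathcal{P}_n))$ is immediate because one-way classical messages can be copied; the quantum challenge is to circumvent no-cloning and still let Bob answer $m$ queries per message.

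Concretely, I would fix any one-way protocol for $\mathcal{P}_n$ of cost $q=Q^1(\mathcal{P}_n)$ with error $1/3$, and amplify its error to $\varepsilon=1/\mathrm{poly}(m)$ by having Alice send $k=O(\log m)$ independent copies of her message and having Bob compute a reversible majority vote. The amplified protocol has a measurement unitary $V_B$ whose application writes $\mathcal{P}_n(A,B)$ into a designated output qubit with probability at least $1-\varepsilon$. Alice sends the amplified message $\sigma_{A_i}$, an $O(q\log m)$-qubit state, to Bob for each $i$, for a total of $O(mq\log m)$ qubits. For each $j$ in turn, Bob performs compute--copy--uncompute on $\sigma_{A_i}$: apply $V_{B_j}$ together with a workspace ancilla, CNOT the answer qubit into a fresh output register, and then apply $V_{B_j}^\dagger$. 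At the end he measures all output registers to read off the $m^2$ answer bits.

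The correctness analysis uses the almost-as-good-as-new (gentle measurement) lemma: since $V_{B_j}$ produces the correct answer with probability at least $1-\varepsilon$, each compute--copy--uncompute step perturbs $\sigma_{A_i}$ by at most $O(\sqrt{\varepsilon})$ in trace distance. Iterating over $j$, the cumulative perturbation of $\sigma_{A_i}$ is $O(m\sqrt{\varepsilon})$, and a union bound over the $m^2$ answers and the $m$ subroutines shows that $\varepsilon=1/\mathrm{poly}(m)$ suffices for overall error below $1/3$; reaching this error by amplification needs only $k=O(\log m)$ copies, giving total cost $O(mQ^1(\mathcal{P}_n)\log m)$. The main obstacle is balancing amplification against reuse: without amplification the per-extraction disturbance is too large to survive $m$ rounds, while giving Bob a fresh copy per query would cost $\Omega(m^2 q)$; the $\log m$ overhead in the theorem is precisely the amplification needed to pack $m$ of Bob's queries into a single message from Alice.
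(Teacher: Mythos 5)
Your proposal is correct and takes essentially the same approach as the paper: amplify the one-way protocol to error $1/\mathrm{poly}(m)$ by sending $O(\log m)$ copies of each message $\rho(A_i)$, have Bob take a reversible majority vote, record the answer, and uncompute, with a gentle-measurement argument showing the cumulative disturbance over all $m$ reuses stays negligible. The only cosmetic difference is that you defer measurement by CNOTing into fresh output registers, whereas the paper measures the majority register each round; the two are equivalent.
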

\begin{proof}
Consider an optimal one-way quantum communication protocol for ${\cal P}_n$. Let $\rho(A)$ be the mixed state on at most $Q^1({\cal P}_n)$ qubits that Alice sends Bob and let $U(B)$ be the unitary that Bob applies to $\rho(A)\otimes\ket{0}\bra{0}_{\cal W}\otimes \ket{0}\bra{0}_{\cal A}$, for some workspace $\cal W$ and single-qubit answer register $\cal A$, so that he measures ${\cal P}_n(A,B)$ in the answer register with probability at least $2/3$.

We construct a (one-way) protocol for ${\cal P}_n^{\otimes m}$ as follows. Let Alice have input $A_1,\dots,A_m$, and Bob $B_1,\dots,B_m$. For every $i\in [m]$, Alice sends Bob $(\rho(A_i))^{\otimes c\log m}$, where $c$ is a large enough constant. For each $i,j\in [m]$, Bob applies $U(B_j)^{\otimes c\log m}$ to $(\rho(A_i)\otimes \ket{0}\bra{0}_{\cal W}\otimes \ket{0}\bra{0}_{\cal A})^{\otimes c\log m}$. He then computes the majority of the answer registers in a new single-qubit register, which he measures. 
Let $\rho(A_i,B_j):=U(B_j)(\rho(A_i)\otimes \ket{0}\bra{0}_{\cal W}\otimes\ket{0}\bra{0}_{\cal A})U(B_j)^\dagger=\sum_{b,b'\in\{0,1\}}\rho_{b,b'}\otimes \ket{b}\bra{b'}$, so the state Bob measures is (up to permuting registers):
\begin{equation*}
\sum_{x,x'\in \{0,1\}^\ell}\bigotimes_{i=1}^\ell \rho_{x_i,x_i'}\otimes \ket{x}\bra{x'}\otimes \ket{\text{maj}(x)}\bra{\text{maj}(x')},
\end{equation*}
where $\text{maj}(x)=1$ if $|x|\geq \ell/2$ and $0$ otherwise. Assume that ${\cal P}_n(A_i,B_j)=1$, as the $0$ case is nearly identical. Then the probability of success in a single round is $\Tr(\rho_{1,1})\geq 2/3$, so the probability of success upon measuring the majority register is: 
\begin{equation*}
\sum_{x\in\{0,1\}^\ell:|x|\geq \ell/2}\Pi_{i=1}^\ell\Tr(\rho_{x_i,x_i})=\sum_{k=0}^{\floor{\ell/2}}\binom{\ell}{k}\Tr(\rho_{1,1})^k(1-\Tr(\rho_{1,1}))^{\ell-k}\geq 1-e^{-\Omega(\ell)}=1-m^{-\Omega(1)},
\end{equation*}
where the inequality follows from Hoeffding's inequality, and the constant in $\Omega(1)$ depends on $c$. Thus, Bob gets the correct answer with high probability, but furthermore, this measurement causes negligible damage to the state $\rho(A_i,B_j)^{\otimes \ell}$, so Bob can apply $(U(B_j)^\dagger)^{\otimes \ell}$ to recover $\rho(A_i)^{\otimes \ell}$, to be used again. 
The error in the state remains negligible as long as Bob does this no more than $m^{O(1)}$ times.
\end{proof}

Call a theorem of the form $Q({\cal P}_n^{\otimes m})=\Omega(\min\{mn,m^2Q({\cal P}_n)\})$ a \emph{(weak) all-pairs product theorem}. The $\min\{mn,\cdot\}$ is to account for the fact that we always have a trivial upper bound of $mn$, and so if we did not include this, the statement would always be false for some values of $m$ and $n$. In this work, we give an example of a set-join for which an all-pairs direct product theorem does not hold --- in particular, in Section \ref{sec:BMM} we will give an upper bound of $O(m^{3/2}\sqrt{n})$ for the composition join, showing that this problem does not satisfy an all-pairs product theorem. Although we show that such a statement holds for matrix multiplication over $\mathbb{F}_2$, in that case, we have $\min\{mn,m^2Q({\cal P}_n)\}=mn$ for all $m$ and $n$, so the best strategy is always for Alice to send her whole input to Bob, rather than for Alice and Bob to compute $m^2$ instances of ${\cal P}_n$. It is an open question whether or not there exists a predicate for which an all-pairs product theorem holds in a non-trivial sense --- that is, the best strategy is to compute $m^2$ instances of ${\cal P}_n$.

\section{Composition Join (Boolean Matrix Multiplication)}\label{sec:BMM}

In this section, we give an upper bound on the communication complexity of Boolean matrix multiplication (equivalent to computing the composition join), proving our main theorem. As in \cite{VanGucht+PODS15}, we consider the following promise version of the problem, in which the output has at most $\ell$ ones.\vspace{2mm}

\compromproblem{Boolean Matrix Multiplication}{$\mathsf{BMM}_{m,n,\ell}$}{$A\in\{0,1\}^{m\times n}$}{$B\in \{0,1\}^{n\times m}$}{$|A\ast B|\leq \ell$}{$\mathsf{BMM}_{m,n,\ell}(A,B)=A\ast B=\{(i,j)\in [m]\!\times\! [m]:\exists k\in [n], A[i,k]=B[k,j]=1\}$}

The communication protocol we give is inspired by the query-optimal quantum algorithm for Boolean matrix multiplication given in \cite{JKLM15}. 
The algorithm of \cite{JKLM15} is based on a subroutine for a problem called \emph{graph collision}. For any family of bipartite graphs $G$ on $n$ vertices, the communication version of graph collision on $G$ is as follows.\vspace{2mm}

\comproblem{Graph Collision}{$\mathsf{GC}_G$}{$f_A\in\{0,1\}^n$}{$f_B\in\{0,1\}^n$}{$\mathsf{GC}_G(f_A,f_B)=\bigvee_{(i,j)\in G}f_A(i)f_B(j)$}

An efficient protocol for this problem can easily be constructed in the communication complexity setting:
\begin{lemma}[Graph collision]\label{cor:gc}
$Q(\mathsf{GC}_G(f_A,f_B))=O(\sqrt{\min\{|f_A|,|f_B|\}})$ for any family of bipartite graphs $G$.
\end{lemma}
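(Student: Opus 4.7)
The plan is to reduce the problem to Grover search, exploiting the crucial fact that the graph $G$ is public (it parameterizes the problem), so any function depending only on $G$ and one party's input can be evaluated locally by that party. First, Alice and Bob exchange $|f_A|$ and $|f_B|$ using $O(\log n)$ classical bits, then swap roles if necessary so that, without loss of generality, $|f_A|\leq |f_B|$. Let $S_A=\{i\in[n]:f_A(i)=1\}$, so $|S_A|=|f_A|$.

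Next I would define $g_B:[n]\to\{0,1\}$ by $g_B(i)=1$ iff there exists $j$ with $(i,j)\in G$ and $f_B(j)=1$. Since Bob knows both $G$ and $f_B$, he can evaluate $g_B$ entirely locally. The key observation is that $\mathsf{GC}_G(f_A,f_B)=1$ if and only if some $i\in S_A$ has $g_B(i)=1$, i.e.\ the instance reduces to searching Alice's (known to Alice) set $S_A$ using Bob's (computable by Bob) oracle $g_B$.

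This is now handled by a communication version of Grover search exactly as in the proof of Lemma~\ref{lem:disj}. Alice prepares $\ket{\pi(S_A)}=\frac{1}{\sqrt{|f_A|}}\sum_{i\in S_A}\ket{i}$ on a $\lceil\log_2 n\rceil$-qubit register, and the parties alternate the reflections $R_A=2\ket{\pi(S_A)}\bra{\pi(S_A)}-I$ (applied by Alice locally) and $R_B=\sum_{i\in[n]}(-1)^{g_B(i)}\ket{i}\bra{i}$ (applied by Bob locally, computing $g_B(i)$ into ancilla, phase-kicking, then uncomputing). Each reflection $R_B$ costs one round in which the $\lceil\log_2 n\rceil$-qubit search register is sent to Bob and back. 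Because $|S_A|=|f_A|$ and (in the yes case) at least one element is marked, $O(\sqrt{|f_A|})$ Grover iterations suffice, giving total communication $O(\sqrt{|f_A|}\log n)=O(\sqrt{\min\{|f_A|,|f_B|\}}\log n)$ qubits; Bob measures the final register and outputs $g_B(i)$ of the measured $i$, which he can send to Alice with one extra bit.

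There is no substantive obstacle — the only subtlety is ensuring Bob's oracle call is coherent, but this is standard: he reversibly computes $g_B(i)$ into an ancilla, imprints the phase $(-1)^{g_B(i)}$, and uncomputes, so no search-register information is leaked into his workspace. The entire argument hinges on the publicness of $G$, which lets Bob evaluate $g_B$ without any communication from Alice and reduces graph collision to a single instance of small-set search whose complexity matches the claimed bound.
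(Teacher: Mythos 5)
Your proposal is correct and matches the paper's proof in essence: both reduce graph collision to small-set disjointness/search by having the party with the larger set locally replace their input with the neighborhood indicator $\{i:\exists j,\ (i,j)\in G,\ f_B(j)=1\}$ (exploiting that $G$ is public), and then run the Grover-based protocol of Lemma~\ref{lem:disj} on a set of size $\min\{|f_A|,|f_B|\}$. The only difference is that you inline the search protocol rather than invoking the disjointness lemma as a black box.
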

\begin{proof}
Alice sends Bob $|f_A|$, and Bob sends Alice $|f_B|$ using $2\log n$ bits of communication. If $|f_A|\leq |f_B|$, Alice sets $a=f_A$ and Bob sets $b=\{i\in [n]:\exists j\in [n], (i,j)\in G, f_B(j)=1\}$. Otherwise, Alice sets $a=\{j\in [n]: \exists i\in [n], (i,j)\in G, f_A(i)=1\}$ and Bob sets $b=f_B$. They finally compute $\mathsf{DISJ}(a,b)$. 
\end{proof}

When we solve graph collision as a subroutine, we will actually want to additionally find \emph{all} graph collisions in a particular instance. That is, we will want to solve the following problem. \vspace{2mm}

\comproblem{Find All Graph Collisions}{$\mathsf{GCall}_G$}
{$f_A\in\{0,1\}^n$}
{$f_B\in \{0,1\}^n$}
{$\mathsf{GCall}_G(f_A,f_B)=\{(i,j)\in G:f_A(i)=f_B(j)=1\}$}

\noindent The following upper bound for ${\sf GCall}_G$ is a corollary of the previous lemma (its proof is similar to the proof of Lemma \ref{lem:disj-find-all}).

\begin{corollary}[Find all graph collisions]\label{cor:gc-all}
$Q(\mathsf{GCall}_G(f_A,f_B))=O(\sqrt{\lambda\min\{|f_A|,|f_B|\}})$, where $\lambda = |\{(i,j)\in G:f_A(i)=f_B(j)=1\}|$.
\end{corollary}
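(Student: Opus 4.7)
The plan is to mimic the proof of Lemma~\ref{lem:disj-find-all}, repeatedly invoking the protocol of Lemma~\ref{cor:gc} and peeling off one colliding vertex at a time from the smaller side. The key observation, compared with the set-intersection case, is that a single colliding vertex on the smaller side may be paired with many neighbors in~$G$ on the larger side; the party holding the larger input can enumerate those neighbors locally, so we only need $O(\lambda_A)$ quantum sub-calls, where $\lambda_A$ is the number of \emph{distinct} colliding vertices on the smaller side, rather than $O(\lambda)$ calls.

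Concretely, I would assume without loss of generality $|f_A|\le|f_B|$, let $I\defeq\{i:f_A(i)=1,\ \exists j\text{ with }(i,j)\in G,\ f_B(j)=1\}$, and set $\lambda_A\defeq|I|$, noting $\lambda_A\le\min\{\lambda,|f_A|\}$. Initialize $\tilde f_A\gets f_A$ and $S\gets\emptyset$ (held by Bob), then repeat: run the protocol of Lemma~\ref{cor:gc} on $(\tilde f_A,f_B)$, which by the witness clause of Lemma~\ref{lem:disj} applied to the underlying $\mathsf{DISJ}$ instance either reports ``no collision'' (in which case we output $S$ and halt) or returns, to both parties, an index $i$ with $\tilde f_A(i)=1$ and some $j$ satisfying $(i,j)\in G$ and $f_B(j)=1$. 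In the latter case Bob locally computes $N_i\defeq\{j:(i,j)\in G,\ f_B(j)=1\}$, appends $\{i\}\times N_i$ to $S$, and Alice updates $\tilde f_A(i)\gets 0$; then we loop.

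For the cost, the loop runs exactly $\lambda_A$ times, and at iteration $k$ the induced $\mathsf{DISJ}$ instance has intersection of size $\lambda_A-k+1$ with ``small'' set of size at most $|f_A|$, so by Lemma~\ref{lem:disj} it costs $O\!\left(\sqrt{|f_A|/(\lambda_A-k+1)}\log n\right)$ qubits. Summing yields
\begin{equation*}
\sum_{k=1}^{\lambda_A}\sqrt{\frac{|f_A|}{\lambda_A-k+1}}\,\log n=\Theta\!\left(\sqrt{\lambda_A |f_A|}\,\log n\right)=O\!\left(\sqrt{\lambda\min\{|f_A|,|f_B|\}}\,\log n\right),
\end{equation*}
using $\lambda_A\le\lambda$ and $|f_A|=\min\{|f_A|,|f_B|\}$. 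The one real subtlety is that bounded-error failures must not accumulate over the at most $\lambda_A\le n$ sequential quantum sub-protocols; this is handled in the standard way by amplifying each inner $\mathsf{DISJ}$ call to success probability $1-1/\mathrm{poly}(n)$, which only changes the bound by a logarithmic factor already absorbed in the $\log n$ overhead.
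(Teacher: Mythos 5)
Your proof is correct and follows exactly the route the paper intends (the paper only says the proof is ``similar to the proof of Lemma~\ref{lem:disj-find-all}'' without spelling it out), and you correctly supply the one detail that makes the analogy work: after a colliding vertex $i$ on the smaller side is found, the other party enumerates all of $i$'s colliding neighbors locally at no communication cost, so only $\lambda_A\le\lambda$ quantum sub-calls are needed. This in fact yields the slightly stronger bound $O(\sqrt{\lambda_A\min\{|f_A|,|f_B|\}}\log n)$, which implies the stated one.
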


The final ingredient we need before presenting our quantum communication protocol for Boolean matrix multiplication is a quantum communication protocol that searches for a 1-instance among $n$ independent instances of a communication problem. Its proof is fairly straightforward and simply combines quantum search with the original communication protocol.

\begin{lemma}[Search over communication instances]\label{lem:comm-search}
Let $f:X\times Y\rightarrow \{0,1\}$ be a communication problem with bounded error quantum communication complexity $Q(f)$. Let $F:X^n\times Y^n\rightarrow \{0,1\}$ be the problem of finding some $i\in [n]$ such that $f(x_i,y_i)=1$. Then $Q(F)=O(\sqrt{\frac{n}{t}}Q(f)\log n)$, where $t=|\{i\in [n]:f(x_i,y_i)=1\}|$. 
\end{lemma}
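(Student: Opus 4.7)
The plan is to run Grover-style search over the $n$ indices, using the given protocol for $f$ (made coherent) as the search oracle. First I would convert the bounded error protocol for $f$ into a reversible subroutine that can be invoked on a superposition of indices. Let $V_1,\dots,V_{Q(f)}$ be the per-round unitaries of the protocol for $f$, where on input $(x,y)$ Alice's rounds depend on $x$ and Bob's rounds on $y$. Given an index register $\ket{i}$ shared between the parties, Alice can implement the controlled operation $\sum_i \ket{i}\bra{i}\otimes V^{A}_r(x_i)$ locally (and similarly for Bob), since the controls are only on her own index register and her own input. Running all $Q(f)$ rounds in this controlled fashion and deferring the final measurement yields a unitary $U_f$ using $O(Q(f))$ qubits of communication whose output register contains $f(x_i,y_i)$ with probability at least $2/3$ on each branch $\ket{i}$.

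Next I would boost $U_f$ into a near-exact oracle: repeat $U_f$ in parallel $c\log n$ times, compute the majority of the answer registers into a fresh bit, and uncompute everything else. By a standard Chernoff/Hoeffding bound this produces a unitary $O_f$ that, on each computational basis state $\ket{i}$, flips a target bit with probability at least $1-n^{-c'}$, for any constant $c'$ we choose, at a communication cost of $O(Q(f)\log n)$ qubits per call. Using $O_f$ together with a phase kickback ancilla gives a Grover-style marking oracle whose error, measured in operator norm against the ideal oracle, is $O(n^{-c'/2})$.

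Now I would apply the BBHT variant of Grover's search to find a marked index $i$ with $f(x_i,y_i)=1$. BBHT requires $O(\sqrt{n/t})$ oracle calls in expectation (the standard argument handles unknown $t$ without an additional logarithmic factor), and since the operator-norm error of $O_f$ is $O(n^{-c'/2})$, choosing $c'$ large enough makes the accumulated error over all $\sqrt{n/t}\le \sqrt{n}$ oracle invocations at most a small constant by the triangle inequality for unitary errors. Once a candidate $i$ is produced, Alice and Bob verify it by running the original protocol for $f$ on $(x_i,y_i)$ an additional $O(\log n)$ times for certainty, adding only $O(Q(f)\log n)$ qubits of communication. Summing the costs gives $Q(F)=O(\sqrt{n/t}\cdot Q(f)\log n)$, as claimed.

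The main delicate point is the composition of a bounded-error quantum subroutine with quantum search: one must be careful that each oracle call is coherent (hence the explicit uncomputation of all intermediate answer registers after the majority vote) and that the per-call error is small enough, $O(1/\sqrt{n/t})$, so that no intermediate measurement of the success flag is needed. Amplifying with $O(\log n)$ repetitions per oracle call is precisely what buys this, and it is the source of the $\log n$ overhead in the stated bound.
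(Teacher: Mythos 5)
Your proposal is correct and follows essentially the same route as the paper: Grover-style amplitude amplification over the index register, with the phase oracle realized by running the protocol for $f$ coherently on a superposition of indices and then uncomputing. You are in fact more careful than the paper's own proof about one point it sweeps under ``without loss of generality'' --- namely, boosting the bounded-error subroutine via $O(\log n)$-fold majority so that the accumulated error over $O(\sqrt{n/t})$ oracle calls stays bounded --- which is a legitimate accounting for the $\log n$ factor in the stated bound.
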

\begin{proof}
Alice creates $\ket{\pi}=\sum_{i\in [n]}\frac{1}{\sqrt{n}}\ket{i}$. Alice and Bob implement quantum search by repeating the reflections 
\begin{equation*}
R_1=2\ket{\pi}\bra{\pi}-I\quad\mbox{ and }\quad R_2=\sum_{i\in [n]}(-1)^{f(x_i,y_i)}\ket{i}\bra{i}
\end{equation*}
$O(\sqrt{n/t})$ times. Each implementation of $R_2$ is accomplished as follows. Let Alice's state be $\sum_{i\in [n]}\alpha_i\ket{i}$. Alice performs the mapping $\ket{i}\mapsto \ket{i,i}$, to get $\sum_{i\in [n]}\alpha_i\ket{i,i}$ and sends half of the state to Bob. Conditioned on their quantum state, Alice and Bob perform the protocol for $f$ using input $(x_i,y_i)$, that is, they perform the protocol on a superposition of inputs. This leaves the state $\sum_{i \in [n]}\alpha_i\ket{i,f(x_i,y_i)}_A\ket{i,f(x_i,y_i)}_B$ (here we assume, without loss of generality, that the final state in the protocol for $f$ does not contain any garbage). Alice then maps this state to $\sum_i (-1)^{f(x_i,y_i)}\ket{i,f(x_i,y_i)}_A\ket{i,f(x_i,y_i)}_B$. They run the protocol in reverse to uncompute $f(x_i,y_i)$, leaving $\sum_i\alpha_i(-1)^{f(x_i,y_i)}\ket{i}_A\ket{i}_B$.
Bob sends his half to Alice, so she can uncompute it, leaving the state $\sum_i\alpha_i(-1)^{f(x_i,y_i)}\ket{i}$, and thus implementing~$R_2$.  
\end{proof}

\noindent We are now ready to state and prove our main theorem.
\begin{theorem}[Upper bound for Boolean matrix multiplication]\label{thm:bmm-protocol}
For all $\ell\in\{1,\dots,m^2\}$, $$Q({\sf BMM}_{m,n,\ell})=O(\sqrt{n}\ell^{3/4}\log m).$$
\end{theorem}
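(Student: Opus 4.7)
The plan is to adapt the query-complexity algorithm for Boolean matrix multiplication from \cite{JKLM15} to the communication setting, using the graph collision primitives established above (Lemma~\ref{cor:gc} and Corollary~\ref{cor:gc-all}) as the core subroutines. The crucial observation I would exploit is that the promise $|A\ast B|\leq \ell$ forces, for every $k\in[n]$, the outer product $A[\cdot,k]\cdot B[k,\cdot]$ to be contained in $A\ast B$, so $|A[\cdot,k]|\cdot|B[k,\cdot]|\leq \ell$ and hence $\min\{|A[\cdot,k]|, |B[k,\cdot]|\}\leq \sqrt{\ell}$. This gives a uniform $O(\ell^{1/4}\log m)$ upper bound on every graph collision call the protocol will make.

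The protocol proceeds as follows. Alice and Bob first exchange the column weights $|A[\cdot,k]|$ and row weights $|B[k,\cdot]|$ for all $k\in[n]$ using $O(n\log m)$ bits, so that for each $k$ they both know which of the two sides is smaller. They then maintain a set $F\subseteq[m]\times[m]$ of discovered output $1$s, initialized to $F=\emptyset$. At each round, they first apply Lemma~\ref{lem:comm-search} to search over $k\in[n]$ for a column index such that $\mathsf{GC}_{G_F}(A[\cdot,k], B[k,\cdot])=1$, where $G_F:=([m]\times[m])\setminus F$ and each check is the graph collision protocol of Lemma~\ref{cor:gc}. Upon finding such a $k$, they run the find-all graph collision protocol of Corollary~\ref{cor:gc-all} on the same $(A[\cdot,k],B[k,\cdot])$ with graph $G_F$, add the returned pairs to $F$, and repeat. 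When step (i) fails to find any $k$, they output $F$. Correctness is immediate: every added pair is genuinely in $A\ast B$, and upon termination no column is a witness for any unfound pair, so $F=A\ast B$.

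For the cost, let $K$ denote the number of rounds. Since each round (except possibly the last) adds at least one new pair to $F$, we have $K\leq \min\{n,\ell\}$. Letting $\lambda_k$ be the number of pairs discovered in the round using column $k$, we have $\sum_k\lambda_k\leq \ell$. By Lemma~\ref{lem:comm-search} applied with $Q(\mathsf{GC}_{G_F})\leq O(\ell^{1/4}\log m)$, the total outer-search cost telescopes to $O(\sqrt{nK}\cdot \ell^{1/4}\log m\log n)$; a short case split on whether $K\leq \ell$ or $K\leq n$ dominates bounds this by $O(\sqrt{n}\ell^{3/4}\log m\log n)$. The sum of find-all costs is $\sum_k\sqrt{\lambda_k\sqrt{\ell}}\log m\leq \ell^{1/4}\sqrt{K\ell}\log m=O(\sqrt{n}\ell^{3/4}\log m)$ by Cauchy-Schwarz and the same case split. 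The main obstacle I anticipate is amplifying each subroutine's bounded error to $1/\mathrm{poly}(m)$ so that a union bound over up to $\ell$ rounds yields constant total failure probability; this should add only an $O(\log m)$ factor, absorbed into the final $\widetilde O(\sqrt{n}\ell^{3/4})$ bound.
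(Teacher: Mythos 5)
Your proposal is correct and follows essentially the same route as the paper's proof: the same invariant (maintaining the set of discovered ones and taking graph collisions with respect to its complement), the same key observation that $\min\{|A[\cdot,k]|,|B[k,\cdot]|\}\leq\sqrt{\ell}$, and the same cost analysis via the telescoping search bound, Cauchy--Schwarz over the $\lambda_k$, and the bound $K\leq\min\{n,\ell+1\}$ on the number of rounds. The only differences are cosmetic (you track the extra $\log n$ from the search lemma and the error-amplification overhead slightly more explicitly than the paper does).
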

\begin{proof}
Alice and Bob run the following communication protocol. \vspace{1mm}

\begin{enumerate}
\item Alice and Bob individually store the all-zero matrix $C$ of size $m\times m$.
\item Repeat:
	\begin{enumerate}
	\item Alice and Bob jointly find $k\in [n]$ such that $\mathsf{GC}_{\overline{C}}(A[\cdot,k],B[k,\cdot])=1$. If none exists, Alice and Bob output $C$.\label{item:find}
	\item Alice and Bob jointly compute $S\leftarrow \mathsf{GCall}_{\overline{C}}(A[\cdot,k],B[k,\cdot])$.\label{item:find-all}
	\item Alice and Bob individually compute $C\leftarrow C+S$.
	\end{enumerate}
\end{enumerate}

In this protocol Alice and Bob each maintain a matrix $C$ containing the 1s of the product $A\ast B$ found by the protocol so far.  They repeatedly search for a new $k\in [n]$ such that $f_A^k=A[\cdot,k]$ and $f_B^k=B[k,\cdot]$ have graph collisions with respect to the graph given by the complement of $C$. When they find such a $k$, they compute all graph collisions. Suppose they find $\{k_1,\dots,k_t\}$ before there are no more $k$ to be found, and let $\lambda_{i}$ be the number of ones found at round $i$. By Lemma \ref{cor:gc} and Lemma \ref{lem:comm-search}, in round $i$ step \ref{item:find} costs $O\(\sqrt{\frac{n}{t-i+1}\min\{|f_A^{k_i}|,f_B^{k_i}|\}}\log m\)$. By Corollary \ref{cor:gc-all}, in round $i$ step \ref{item:find-all} costs $O\(\sqrt{\lambda_i\min\{|f_A^{k_i}|,|f_B^{k_i}|\}}\log m\)$. 
Thus, the total cost is at most:
\begin{equation*}
\sum_{i=1}^t\(\sqrt{\frac{n}{t-i+1}}\sqrt{\min\{|f^{k_i}_A|,|f^{k_i}_B|\}}+\sqrt{\lambda_i\min\{|f_A^{k_i}|,f_B^{k_i}|\}}\)\log m.
\end{equation*}
Note that for any $k$, every $(i,j)$ such that $f_A^k(i)=f_B^k(j)=1$ implies that $(A\ast B)[i,j]=1$, so we necessarily have $\ell\geq |f_A^k|\cdot|f_B^k|$. We therefore have $\min\{|f_A^k|,|f_B^k|\}\leq \sqrt{\ell}$ for all $k\in [n]$, and thus, the total cost is at most (up to constants):
\begin{eqnarray*}
\ell^{1/4}\sum_{i=1}^t\(\sqrt{\frac{n}{t-i+1}}+\sqrt{\lambda_i}\)\log m & \leq & \ell^{1/4}\(\sqrt{nt}+\sqrt{t\sum_{i=1}^t\lambda_i}\)\log m\\
&\leq & \ell^{1/4}(\sqrt{nt}+\sqrt{t\ell})\log m,
\end{eqnarray*}
where in the first line we use the fact that $\sum_{i=1}^ti^{-1/2}=\Theta(\sqrt{t})$ and Cauchy-Schwartz inequality, and in the second line we use the fact that $\sum_{i=1}^t\lambda_i=\ell$, since $\ell$ is the total number of ones we find over all rounds. Finally, observe that since $t$ is the number of distinct witnesses $k\in [n]$ found, $t\leq n$, and since we find at least one new 1 in every round except the last, we also have $t\leq \ell+1$. Thus, the total communication is at most 
\begin{equation*}
(\ell^{1/4}\sqrt{n}+\ell^{3/4})\sqrt{\min\{n,\ell\}}\log m=O\( \ell^{3/4}\sqrt{n}\log m\),
\end{equation*}
as claimed.\qedhere
\end{proof}

\section{Matrix Multiplication over Finite Fields}\label{sec:MM}

In this section we consider matrix multiplication over finite fields and give tight bounds (up to possible polylogarithmic factors) on its communication complexity. We work out here only the case of square matrices over the binary field. Formally, the problem we consider is the following. \vspace{2mm}

\compromproblem{Square matrix multiplication over $\mathbb{F}_2$}{$\mathsf{MM}_{n,\ell}$}{$A\in\mathbb{F}_2^{n\times n}$}
{$B\in\mathbb{F}_2^{n\times n}$}
{$|AB|\le \ell$}
{the matrix $AB\in\mathbb{F}_2^{n\times n}$}

The main result of this section is the following upper bound on the classical (and thus quantum) communication complexity of this problem.
\begin{theorem}[Upper bound for matrix multiplication over $\field$]\label{theorem:upperbounds-fields}
The classical communication complexity of $\mathsf{MM}_{n,\ell}$ is $\widetilde O(n\sqrt{\ell})$.
\end{theorem}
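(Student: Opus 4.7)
The plan is to give a protocol based on a single random linear sketch of $A$, followed by direct handling of the (few) dense columns of the product. Using shared randomness, Alice and Bob sample a uniformly random matrix $M \in \mathbb{F}_2^{r \times n}$ with $r = \Theta(\sqrt{\ell}\, \log n)$. Alice sends $MA \in \mathbb{F}_2^{r \times n}$ to Bob, costing $rn = \widetilde{O}(n\sqrt{\ell})$ bits. Bob then computes $(MA)B = M(AB) = MC$ entirely on his side, so that for every column $j \in [n]$ he holds the length-$r$ sketch $M\cdot C_{\cdot,j}$ of that column of the product.

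For each $j$, Bob decodes by brute force: he searches for a vector $c\in\mathbb{F}_2^n$ with $|c|\le \sqrt{\ell}$ satisfying $Mc = (MC)_{\cdot,j}$. The key claim, established by a union bound, is that with probability $1-1/\mathrm{poly}(n)$ over $M$ the following holds simultaneously for all $j$: if $|C_{\cdot,j}|\le \sqrt{\ell}$, then $C_{\cdot,j}$ is the unique $(\le\sqrt{\ell})$-sparse solution and the decoder returns it; if $|C_{\cdot,j}|>\sqrt{\ell}$, then no $(\le\sqrt{\ell})$-sparse solution exists and the decoder correctly flags $j$ as heavy. Indeed, for each $j$ and each sparse candidate $c\ne C_{\cdot,j}$, the difference $v = c - C_{\cdot,j}$ is a fixed nonzero vector, so $\Pr_M[Mv=0] = 2^{-r}$; since there are at most $n\cdot n^{\sqrt{\ell}}$ such pairs $(j,c)$, taking the constant in $r = \Theta(\sqrt{\ell}\log n)$ large enough drives the failure probability below $1/\mathrm{poly}(n)$.

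Since $\sum_{j} |C_{\cdot,j}| = |AB| \le \ell$, at most $\sqrt{\ell}$ columns of $C$ can be heavy. For each flagged column $j$, Bob sends $B_{\cdot,j}$ to Alice ($n$ bits), she computes $C_{\cdot,j} = AB_{\cdot,j}$ and returns it ($n$ bits), contributing $O(n\sqrt{\ell})$ additional bits. Combined with the sketch transmission this gives a total of $\widetilde{O}(n\sqrt{\ell})$ bits of classical communication, which implies the quantum bound as well.

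The main obstacle is getting the union-bound argument right: one must notice that even when $C_{\cdot,j}$ is very dense, the relevant union bound only ranges over the at most $n^{\sqrt{\ell}}$ sparse candidate decodings $c$ (and not over all of $\mathbb{F}_2^n$), which is precisely why a single $r = \widetilde{O}(\sqrt{\ell})$-row random sketch is enough to tell light columns from heavy columns. The fact that Bob's brute-force decoder takes exponential time is immaterial since we are bounding communication only, but a more careful implementation (for example, a peeling-style decoder based on product hashes) would also work within the same communication budget.
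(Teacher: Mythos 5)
Your protocol is correct and yields the stated $\widetilde{O}(n\sqrt{\ell})$ bound, and its skeleton matches the paper's: sketch $A$ with a random matrix of $\widetilde{O}(\sqrt{\ell})$ rows so that Bob can locally recover every sparse column of $C=AB$, then handle the at most $O(\sqrt{\ell})$ heavy columns by direct transmission. The ingredients, however, are genuinely different. The paper invokes an $\mathbb{F}_2$ compressed-sensing result as a black box (Lemma \ref{lemma:compressed-sensing}: an $O(\kappa)\times n$ measurement matrix with a reconstruction function succeeding with probability $0.99$ per vector, boosted by logarithmic repetition), and it needs a separate preliminary phase --- subsampling $n/\sqrt{\ell}$ rows of $A$ and running a Freivalds-style nonzero-column test (Lemma \ref{lemma:Freivalds}) --- just to decide which columns are dense before the sketch is applied. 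You replace the compressed-sensing lemma with a uniformly random sketch of $r=\Theta(\sqrt{\ell}\log n)$ rows analyzed by an elementary union bound over the at most $n^{\sqrt{\ell}+O(1)}$ sparse candidates, which makes the argument self-contained and gives simultaneous success on all $n$ columns in one shot; and, more substantively, your decoder certifies heaviness for free (a column is flagged heavy exactly when its sketch has no sparse preimage), so the paper's entire dense-column-detection step becomes unnecessary. The price is an extra $\log n$ factor in the sketch length relative to the paper's $O(\sqrt{\ell})$-row matrix, which the $\widetilde{O}$ absorbs. One point to make explicit: either work in the public-coin model (invoking Newman's theorem if private coins are required), or simply have Alice transmit $M$ itself, since its $rn=\widetilde{O}(n\sqrt{\ell})$ bits fit within the budget.
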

We will need two lemmas to prove Theorem~\ref{theorem:upperbounds-fields}.
The first lemma is a finite-field version of a result related to compressed sensing used in~\cite{VanGucht+PODS15}.  
The proof of this finite-field version can be found in \cite{Draper+ISIT09}.
\begin{lemma}\label{lemma:compressed-sensing}
For any positive integer $n$ and any integer $\kappa\in\{1,\ldots,n\}$, there are a distribution on random matrices
$M\in \field^{O(\kappa)\times n}$ and a reconstruction function $\mathsf{Rec}(\cdot)$ such that for any vector $x\in\field^n$ with at most $\kappa$ non-zero entries the inequality 
\[
\Pr_M\big[\mathsf{Rec}(Mx)=x\big]>0.99.
\]
holds ({i.e.}, $\mathsf{Rec}(\cdot)$ applied on $Mx$ returns $x$ with high probability).
\end{lemma}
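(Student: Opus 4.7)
The plan is to take $M$ to be a random $m\times n$ matrix with $m=O(\kappa\log(n/\kappa))$ whose entries are drawn i.i.d.\ uniformly from $\field$, and to let $\mathsf{Rec}(v)$ return any vector $x\in\field^n$ with $|x|\leq\kappa$ satisfying $Mx=v$ (and an arbitrary symbol if none exists). Correctness will follow once one shows that, with probability at least $0.99$ over the choice of $M$, the map $x\mapsto Mx$ is injective on the set $\{x\in\field^n:|x|\leq\kappa\}$: since the true input $x$ is $\kappa$-sparse by assumption, it is then the unique $\kappa$-sparse preimage of $Mx$, and so $\mathsf{Rec}(Mx)=x$.

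The key step is a standard union-bound calculation. Injectivity of $M$ on $\kappa$-sparse vectors is equivalent to $My\neq 0$ for every nonzero $y\in\field^n$ with $|y|\leq 2\kappa$, since any collision $Mx=Mx'$ between $\kappa$-sparse vectors yields such a $y=x+x'$. For any fixed nonzero $y$, each row of $M$ is a uniform element of $\field^n$, so its inner product with $y$ over $\field$ is a uniform bit; independence across the rows then gives $\Pr_M[My=0]=2^{-m}$. Union-bounding over the at most $\sum_{j=1}^{2\kappa}\binom{n}{j}\leq (en/\kappa)^{2\kappa}$ relevant $y$, the failure probability is at most $(en/\kappa)^{2\kappa}\cdot 2^{-m}$, which drops below $0.01$ once $m\geq C\kappa\log(n/\kappa)$ for a sufficiently large absolute constant $C$.

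The reconstruction function itself can be defined, for existence purposes, by brute-force minimum-weight decoding: enumerate all $x$ with $|x|\leq\kappa$ and return one satisfying $Mx=v$ if any exist. Efficiency is not demanded by the lemma statement (and general minimum-weight decoding over $\field$ is $\mathsf{NP}$-hard), so only well-definedness matters; on the good event, which occurs with probability $>0.99$, there is exactly one $\kappa$-sparse preimage of $Mx$ and it equals $x$.

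The only subtle point is the row count $O(\kappa)$ as stated: the elementary argument above only yields $O(\kappa\log(n/\kappa))$, which collapses to $O(\kappa)$ in the dense regime $\kappa=\Theta(n)$ but otherwise hides a logarithmic factor in $n/\kappa$. Obtaining a truly linear $O(\kappa)$ bound uniformly in $n$ appears to require the sharper construction of~\cite{Draper+ISIT09}, which the authors cite precisely for this. For the downstream use in Theorem~\ref{theorem:upperbounds-fields}, however, the logarithmic slack is absorbed by the $\widetilde O(\cdot)$ in the final communication bound, so the elementary random-matrix argument sketched above already suffices for that application and constitutes the main conceptual content of the proof.
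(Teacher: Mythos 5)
The paper does not actually prove this lemma; it delegates the proof entirely to the citation \cite{Draper+ISIT09}, so your self-contained random-matrix argument is a genuine addition rather than a rederivation. Your argument is correct as far as it goes: the reduction of injectivity on $\kappa$-sparse vectors to $My\neq 0$ for nonzero $y$ of weight at most $2\kappa$, the probability $2^{-m}$ per fixed $y$, and the union bound are all sound, and they yield the lemma with $M\in\field^{O(\kappa\log(2n/\kappa))\times n}$, which is indeed all that Theorem \ref{theorem:upperbounds-fields} needs since the final bound there is stated with $\widetilde O(\cdot)$.

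You are, however, too charitable on the one point where you defer to \cite{Draper+ISIT09}: over $\field$, the row count $O(\kappa)$ claimed in the lemma statement is not merely hard to obtain by elementary means --- it is information-theoretically impossible whenever $\kappa=o(n)$, so the logarithmic loss in your argument is inherent rather than an artifact. To see this, fix any realization of $M$ and of the (possibly $M$-dependent) decoder; since the composite map $x\mapsto Mx\mapsto\mathsf{Rec}(Mx)$ factors through $\field^m$, at most $2^m$ sparse vectors can be correctly recovered under that realization. Averaging over $M$, if every one of the $N\geq\binom{n}{\kappa}$ vectors of weight at most $\kappa$ is recovered with probability greater than $0.99$, then $0.99\,N\leq 2^m$, forcing $m\geq\kappa\log_2(n/\kappa)-O(1)$. (The $O(\kappa)$ measurement counts in the compressed-sensing-over-finite-fields literature are counts of symbols over a field of size growing with $n$, which is why they do not contradict this.) So the lemma as stated is slightly overclaimed, your version with the extra logarithm is the correct one, and --- as you note --- the discrepancy is harmless for the downstream application. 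It would be cleaner to state the lemma with $O(\kappa\log n)$ rows and present your proof, rather than cite an $O(\kappa)$ bound that cannot hold in this form over $\field$.
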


The second lemma shows how to use Freivalds' technique to detect non-zero columns of a matrix product.
Similar ideas were used in \cite{Gasieniec+ISAAC14}.
\begin{lemma}\label{lemma:Freivalds}
Let $m$ and $n$ be two positive integers.
Consider the setting where Alice has for input a matrix $A\in\field^{m\times n}$ and Bob has for input a matrix $B\in\field^{n\times n}$. Alice and Bob can detect, with high probability, which columns of $AB$ contain at least one non-zero entry with $\widetilde O(n)$ communication.
\end{lemma}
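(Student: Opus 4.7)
The plan is to apply Freivalds' randomized matrix-product verification simultaneously to every column of $AB$, exploiting the associativity $r^\top(AB) = (r^\top A)B$. For a uniformly random row vector $r\in\field^m$ and any fixed column $c_j$ of $AB$, the scalar $r^\top c_j$ is identically $0$ when $c_j=0$ and equal to $0$ with probability exactly $1/2$ when $c_j\neq 0$. In the communication setting, Alice locally computes $r^\top A\in\field^n$ and sends these $n$ bits to Bob, who then locally computes $(r^\top A)B\in\field^n$. The $j$-th entry of the resulting vector is precisely $r^\top c_j$, so Bob learns one Freivalds-style test for each of the $n$ columns at a cost of only $n$ bits of one-way communication. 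The vector $r$ itself can be drawn from public randomness (standard in classical communication complexity, or obtained via Newman's theorem at the additional cost of $O(\log(mn))$ bits).

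A single round correctly classifies every zero column with certainty, and flags each non-zero column with probability at least $1/2$. Repeating the round with $t=\Theta(\log n)$ independently sampled random vectors $r_1,\ldots,r_t$, and declaring column $j$ to be non-zero iff it was flagged in at least one round, reduces the per-column miss probability to $2^{-t}$. Choosing the hidden constant in $t$ large enough and taking a union bound over the $n$ columns keeps the overall error below any desired constant. After the $t$ rounds Bob holds an $n$-bit indicator of the non-zero columns of $AB$ and sends it to Alice at a cost of $n$ further bits, giving total communication $t n + n = O(n\log n) = \widetilde O(n)$, as required.

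There is no serious technical obstacle: the argument is a direct application of Freivalds' trick, with the essential observation being that the $n$ parallel column-tests can be bundled into a single $n$-bit message via the associativity $r^\top(AB)=(r^\top A)B$, so that the $\widetilde O(n)$ bound does not depend on $m$. The only point requiring minor care is the standard amplification plus union bound, ensuring that the error remains below a constant uniformly across all $n$ columns.
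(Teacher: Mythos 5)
Your proposal is correct and follows essentially the same route as the paper: a uniformly random $v\in\field^m$, the one-way message $v^TA$ of $n$ bits, detection of each non-zero column with probability at least $1/2$ per round, and $O(\log n)$ repetitions with a union bound. The only cosmetic difference is that the paper has Bob send $v^TAB$ back to Alice each round (still $O(n)$ per round), whereas you invoke public randomness and defer the final indicator to one message; both give $\widetilde O(n)$ total.
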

\begin{proof}
Consider the following procedure: Alice takes a vector $v$ uniformly at random in $\field^m$; Alice sends the row-vector $v^TA\in\field^{n}$ to Bob; Bob sends the row-vector $v^TAB\in\field^{n}$ to Alice. This procedure has communication complexity $2n$ and, for each column of $AB$, enables Alice and Bob to decide with probability at least 1/2 whether this column contains at least one non-zero entry. By repeating this procedure a logarithmic number of times, Alice and Bob are able to find, with high probability, which columns of $AB$ contain at least one non-zero entry. 
\end{proof}

\noindent We are now ready to prove Theorem \ref{theorem:upperbounds-fields}.
\begin{proof}[Proof of Theorem \ref{theorem:upperbounds-fields}]
We assume for convenience that both $\sqrt{\ell}$ and $n/\sqrt{\ell}$ are integers (the general case is handled similarly).
We will say that a column of $AB$ is dense if it contains at least $0.9\sqrt{\ell}$ non-zero entries, and say that a column of $AB$ is sparse if it contains at most $1.1\sqrt{\ell}$ non-zero entries (note that a column can be both sparse and dense).
The protocol is as follows.
\begin{itemize}
\item[1.]
Alice and Bob partition the columns of $AB$ into dense columns and sparse columns: they compute a set of indexes $S\subseteq\{1,\ldots,n\}$ such that, for any $j\in\{1,\ldots,n\}$, the $j$-th column of $AB$ is dense if $j\in S$ and sparse if $j\notin S$.
\item[2.]
Alice and Bob compute all entries of all columns of $AB$ with index in $S$.
\item[3.]
Alice and Bob compute all entries of all the columns of $AB$ with index in $[n]\setminus S$.
\end{itemize}

Step 1 can be done probabilistically with $\widetilde O(n)$ bits of communication by repeating the following procedure: Alice constructs a $(n/\sqrt{\ell})\times n$ matrix $A'$ by selecting $n/\sqrt{\ell}$ rows of $A$ uniformly at random; Alice and Bob then use the protocol of Lemma \ref{lemma:Freivalds} (with $A'$ as Alice's input and $B$ as Bob's input) to decide which columns of $A'B$ have more than one non-zero entry. Repeating this procedure a logarithmic number of times enables Alice and Bob to decide, with high probability, which columns of $AB$ are not dense: for a non-dense column of $AB$ ({i.e.}, a column with less than $0.9\sqrt{\ell}$ non-zero entries) the corresponding column of $A'B$ will not contain any non-zero entry with high probability (on the choice of~$A'$). The indices of the other columns are collected in $S$. The indices in $S$ thus correspond only to dense columns of $AB$. While the set $S$ may not contain the indices of all the dense columns of $AB$, it can be seen from a similar argument that all non-sparse columns of $AB$ ({i.e.}, the columns with at least $1.1\sqrt{\ell}$ non-zero entries) will be put in $S$, which means that
all indices in $[n]\setminus S$ correspond to columns of $AB$ that are sparse.

Step 2 can be done with $O(|S|n)=O(\sqrt{\ell}n)$ bits of communication (note that $|S|\le \frac{1}{0.9}\sqrt{\ell}$ since $AB$ has only at most $\ell$ non-zero entries): Bob simply sends the entries $B[i,j]$ for all $(i,j)\in \{1,\ldots,n\}\times S$, and then Alice computes $AB[i,j]$ for all $(i,j)\in \{1,\ldots,n\}\times S$.

Step 3 can be done with $\widetilde O(n\sqrt{\ell})$ bits of communication using Lemma~\ref{lemma:compressed-sensing} with $\kappa=\ceil{1.1\sqrt{\ell}}$, by repeating the following procedure a logarithmic number of times: Alice chooses a random matrix~$M$ as in Lemma~\ref{lemma:compressed-sensing} and sends $MA$ to Bob; for each $j\in \{1,\ldots,n\}\setminus S$, Bob computes $\mathsf{Rec}(MAz)$ where~$z$ denotes the $j$-th column of $B$. 
\end{proof}

We now show a lower bound on the quantum (and thus also classical) communication complexity of matrix multiplication over $\field$, which matches the upper bound of Theorem \ref{theorem:upperbounds-fields} up to  polylogarithmic factors.

\begin{theorem}\label{th:MM-lower}
The quantum communication complexity of $\mathsf{MM}_{n,\ell}$ is $\Omega(n\sqrt{\ell})$.
\end{theorem}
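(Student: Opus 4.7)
The plan is to reduce from inner product modulo 2, exploiting the allowance of up to $\ell$ nonzero entries in the product. Set $k := \lfloor \sqrt{\ell} \rfloor$. Given an instance of $k$ independent inner products---Alice holds vectors $a^{(1)},\ldots,a^{(k)}\in\mathbb{F}_2^n$ and Bob holds vectors $b^{(1)},\ldots,b^{(k)}\in\mathbb{F}_2^n$---Alice locally builds $A\in\mathbb{F}_2^{n\times n}$ whose $i$-th row is $a^{(i)}$ for $i\le k$ and is zero for $i>k$, and Bob locally builds $B\in\mathbb{F}_2^{n\times n}$ whose $j$-th column is $b^{(j)}$ for $j\le k$ and is zero for $j>k$. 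This construction is purely local and incurs no communication.

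A direct computation shows that $(AB)[i,j]=\langle a^{(i)},b^{(j)}\rangle_{\mathbb{F}_2}$ for $i,j\le k$ and $(AB)[i,j]=0$ otherwise, so $|AB|\le k^2\le \ell$ and the promise of $\mathsf{MM}_{n,\ell}$ is satisfied. Any protocol for $\mathsf{MM}_{n,\ell}$ therefore delivers the embedded block of pairwise inner products to Bob. In particular, Bob can read off the $k$ diagonal entries $\langle a^{(i)},b^{(i)}\rangle_{\mathbb{F}_2}$ and XOR them to obtain
\[
\bigoplus_{i=1}^{k}\langle a^{(i)},b^{(i)}\rangle_{\mathbb{F}_2} \;=\; \langle \tilde a,\tilde b\rangle_{\mathbb{F}_2},
\]
where $\tilde a$ and $\tilde b$ are the concatenations $(a^{(1)}\mid\cdots\mid a^{(k)})$ and $(b^{(1)}\mid\cdots\mid b^{(k)})$ in $\mathbb{F}_2^{nk}$. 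Thus any bounded-error quantum protocol for $\mathsf{MM}_{n,\ell}$ using $T$ qubits of communication yields a bounded-error quantum protocol computing the inner product modulo 2 on $nk$-bit inputs using $T$ qubits.

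To finish, I will invoke the Cleve--van Dam--Nielsen--Tapp lower bound, which states that computing the inner product modulo 2 of two $N$-bit vectors requires $\Omega(N)$ qubits of bounded-error quantum communication, even when Alice and Bob share unlimited prior entanglement. Applied with $N=nk=n\lfloor\sqrt{\ell}\rfloor$, this gives $T=\Omega(n\sqrt{\ell})$, matching Theorem~\ref{theorem:upperbounds-fields} up to polylogarithmic factors. The only real subtlety is making the reduction promise-preserving---we must guarantee $|AB|\le\ell$---which is handled cleanly by choosing the embedded block to be $k\times k$ with $k=\lfloor\sqrt{\ell}\rfloor$. Notably, no direct-sum or direct-product theorem for $\mathsf{IP}$ is required, because the XOR of $k$ independent inner products is itself a single inner product on the concatenated inputs; the single-instance $\mathsf{IP}$ bound is the only lower-bound primitive in play, which is why I expect the proof to be short.
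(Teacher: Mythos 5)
Your proposal is correct and follows essentially the same route as the paper: embed $\sqrt{\ell}$ inner-product instances as the first $\sqrt{\ell}$ rows of $A$ and columns of $B$, observe that the XOR of the diagonal of $AB$ is the inner product of the concatenated $n\sqrt{\ell}$-bit vectors, and invoke the linear quantum lower bound for inner product (the paper cites Kremer, you cite Cleve--van Dam--Nielsen--Tapp; both give the same bound). The only cosmetic difference is your explicit handling of non-integer $\sqrt{\ell}$ via the floor, which the paper dispatches with a convenience assumption.
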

\begin{proof}
Assume for convenience that $\sqrt{\ell}$ is an integer (the general case is handled similarly).
Let $x^1,\ldots,x^{\sqrt{\ell}},y^1,\ldots,y^{\sqrt{\ell}}$ be $2{\sqrt{\ell}}$ vectors in $\mathbb{F}_2^n$. 
Let $x\in\mathbb{F}_2^{n\sqrt{\ell}}$ be the vector obtained by concatenating $x^1,\ldots,x^{\sqrt{\ell}}$,
and $y\in\mathbb{F}_2^{n\sqrt{\ell}}$ be the vector obtained by concatenating $y^1,\ldots,y^{\sqrt{\ell}}$.

Construct the $n\times n$ matrix $A$ by putting the vector $x^i$ as its $i$-th row, for each $i\in\{1,\ldots,\sqrt{\ell}\}$, and setting the next $n-\sqrt{\ell}$ rows to zero (observe that $\sqrt{\ell}\le n$ since $\ell\le n^2$).
Construct the $n\times n$ matrix $B$ by putting the vector $y^j$ as its $j$-th column, for each $j\in\{1,\ldots,\sqrt{\ell}\}$, and setting the next $n-\sqrt{\ell}$ columns to zero.
Observe that $|AB|\le\ell$ and the parity of the diagonal entries of the matrix product $AB$ is equal to 
\[
\bigoplus_{i=1}^{\sqrt{\ell}} x^i\cdot y^i=x\cdot y.
\]
We thus obtain a reduction from computing the inner product of two vectors in $\mathbb{F}_2^{n\sqrt{\ell}}$ to solving $\mathsf{MM}_{n,\ell}$. Since the quantum communication complexity of the former problem is $\Omega(n\sqrt{\ell})$, as shown in \cite{Kremer95}, we obtain the same lower bound for $\mathsf{MM}_{n,\ell}$.
\end{proof}

\section{Lower Bounds for Boolean Matrix Multiplication}\label{sec:lower}

An important open problem of this work is to prove a tight lower bound on the bounded error quantum communication complexity of Boolean matrix multiplication, {i.e.}, to show that the upper bound of Theorem \ref{thm:bmm-protocol} is tight. Let us focus on the square case ({i.e.}, $m=n$). We are able to prove two lower bounds, each of which is tight for one extreme value of $\ell$: $\ell=O(1)$ or $\ell=\Omega(n^2)$, but neither is tight for the range $\ell\in (\omega(1),o(n^2))$. We further show that assuming a plausible OR-lemma, our upper bound is indeed tight, up to logarithmic factors. 

\begin{proposition}\label{prop:bmm-lower-bound1}
For all $\ell\in\{1,\dots,n^2\}$, $Q({\sf BMM}_{n,n,\ell})=\Omega(\sqrt{n\ell})$.
In particular, when $\ell=O(1)$, then $Q(\mathsf{BMM}_{n,n,\ell})=\Omega(\sqrt{n}\ell^{3/4})$. 
\end{proposition}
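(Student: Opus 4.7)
The plan is to prove the lower bound by reducing a direct product of set-disjointness instances to $\mathsf{BMM}_{n,n,\ell}$ and invoking the strong direct product theorem for quantum set disjointness.

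Set $k:=\lfloor\sqrt{\ell}\rfloor$, so that $k^2\le\ell$ and (since $\ell\le n^2$) $k\le n$. Given $k$ independent instances $(a^1,b^1),\dots,(a^k,b^k)\in\{0,1\}^n\times\{0,1\}^n$ of $\mathsf{DISJ}_n$, I will build Alice's matrix $A\in\{0,1\}^{n\times n}$ by placing $a^i$ as its $i$-th row for $i\in[k]$ and setting the remaining rows to zero, and symmetrically build Bob's matrix $B\in\{0,1\}^{n\times n}$ with $b^j$ as its $j$-th column for $j\in[k]$ and the remaining columns zero. Since the only nonzero rows of $A$ are indexed by $[k]$ and the only nonzero columns of $B$ are indexed by $[k]$, the product $A\ast B$ is supported inside the $k\times k$ upper-left block, so $|A\ast B|\le k^2\le\ell$ and the sparsity promise of $\mathsf{BMM}_{n,n,\ell}$ is satisfied. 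Moreover, for each $i\in[k]$ a direct calculation gives $(A\ast B)[i,i]=\mathsf{DISJ}_n(a^i,b^i)$, so any protocol for $\mathsf{BMM}_{n,n,\ell}$ also solves the $k$-fold direct product $\mathsf{DISJ}_n^{(k)}$ by extracting these diagonal entries.

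The second step is to invoke the strong direct product theorem for quantum set disjointness (due to Klauck, with later strengthenings by Sherstov), which asserts $Q(\mathsf{DISJ}_n^{(k)})=\Omega(k\sqrt{n})$. Combining this with the reduction yields
\begin{equation*}
Q(\mathsf{BMM}_{n,n,\ell})\ge Q(\mathsf{DISJ}_n^{(k)})=\Omega(k\sqrt{n})=\Omega(\sqrt{n\ell}).
\end{equation*}
The ``in particular'' statement at $\ell=O(1)$ is then immediate, since both $\sqrt{n\ell}$ and $\sqrt{n}\ell^{3/4}$ reduce to $\Theta(\sqrt{n})$ in that regime, which matches the upper bound of Theorem~\ref{thm:bmm-protocol} at this extreme.

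The only conceptual delicacy I foresee is verifying that the off-diagonal entries $(A\ast B)[i,j]=\mathsf{DISJ}_n(a^i,b^j)$ for distinct $i,j\in[k]$ do not obstruct the reduction: these cross-terms may well evaluate to $1$, but they still lie inside the $k\times k$ block, so $|A\ast B|\le k^2\le\ell$ continues to hold, and they do not affect the diagonal entries we actually use to solve the direct product. All the heavy lifting is therefore encapsulated in the strong direct product theorem for disjointness, so no novel technical obstacle is expected in the formal write-up beyond a careful statement and citation of that result.
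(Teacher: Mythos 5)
Your proposal is correct and follows essentially the same route as the paper: the identical embedding of $k\approx\sqrt{\ell}$ disjointness instances into the rows of $A$ and columns of $B$, reading off the diagonal of the $k\times k$ block, and invoking the direct product theorem for quantum set disjointness (the paper cites Klauck--\v{S}palek--de Wolf for the $\Omega(k\sqrt{n})$ bound on $Q(\mathsf{DISJ}_n^{(k)})$). Your extra care about the floor of $\sqrt{\ell}$ and the harmless off-diagonal cross-terms is fine but does not change the argument.
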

\begin{proof}
We can embed $\sqrt{\ell}\leq n$ instances $\{(a^{(i)},b^{(i)})\}_{i=1}^{\sqrt{\ell}}$ of $\mathsf{DISJ}_n$ in an instance of $\mathsf{BMM}_{n,n,\ell}$ as follows. Let $A$ have $a^{(i)}$ in the $i$-th row for $i=1,\dots,\sqrt{\ell}$, and all zeros elsewhere, and let $B$ have $b^{(i)}$ in the $i$-th column for $i=1,\dots,\sqrt{\ell}$, and zeros elsewhere. Then $AB$ is 0 except in the upper left $\sqrt{\ell}\times\sqrt{\ell}$ submatrix, so $|AB|\leq \ell$, and $(AB)[i,i]={\sf DISJ}_n(a^{(i)},b^{(i)})$, so $AB$ encodes ${\sf DISJ}_n^{(\sqrt{\ell})}(a^{(1)},\dots,a^{(\sqrt{\ell})},b^{(1)},\dots,b^{(\sqrt{\ell})})$. The result follows from the $\Omega(\sqrt{\ell}\sqrt{n})$ lower bound on $Q({\sf DISJ}_n^{(\sqrt{\ell})})$ shown in \cite{KSW07}. 
\end{proof}

\begin{proposition}\label{prop:bmm-lower-bound2}
For all $\ell\in\{1,\dots,n^2\}$, $Q({\sf BMM}_{n,n,\ell})=\Omega(\ell)$. In particular, when $\ell=\Omega(n^2)$, then $Q(\mathsf{BMM}_{n,n,\ell})=\Omega(\ell^{3/4}\sqrt{n})$.
\end{proposition}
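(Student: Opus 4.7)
The plan is to reduce the ``Bob learns Alice's $\ell$-bit string'' problem to $\mathsf{BMM}_{n,n,\ell}$. The reduction will embed Alice's string as the ones of the (sparse) output matrix, so that any protocol for $\mathsf{BMM}_{n,n,\ell}$ lets Bob recover this string, triggering a standard information-theoretic lower bound of $\Omega(\ell)$.

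Concretely, given an arbitrary $a\in\{0,1\}^\ell$ with $\ell\le n^2$, I would write $\ell=(s-1)n+r$ with $s\le n$ and $1\le r\le n$, and let Alice place the bits of $a$ into the first $s$ columns of her $n\times n$ matrix $A$ in column-major order (using only the first $r$ entries of column $s$). Bob's matrix $B$ would be the input-independent diagonal matrix with $B[k,k]=1$ for $k\in[s]$ and zero elsewhere. Then $(A\ast B)[i,j]=A[i,j]\cdot B[j,j]$ equals $A[i,j]$ for $j\le s$ and $0$ otherwise, so $A\ast B$ encodes $a$ and has Hamming weight $|a|\le\ell$. Hence this is a valid instance of $\mathsf{BMM}_{n,n,\ell}$, and since Bob outputs $A\ast B$ at the end of the protocol, he recovers~$a$.

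To conclude, I would invoke the textbook lower bound that any quantum protocol after which Bob holds an arbitrary $\ell$-bit string of Alice with constant success probability uses $\Omega(\ell)$ qubits of communication (via Holevo/Nayak-style random-access-code arguments, which apply to two-way protocols as well, since Bob's initial mutual information with $a$ is zero). This gives $Q(\mathsf{BMM}_{n,n,\ell})=\Omega(\ell)$. For the ``in particular'' clause, $\ell=\Omega(n^2)$ means $n=O(\sqrt{\ell})$, so $\ell^{3/4}\sqrt{n}=O(\ell^{3/4}\cdot\ell^{1/4})=O(\ell)$, and the lower bound $\Omega(\ell)$ matches $\Omega(\ell^{3/4}\sqrt{n})$ and hence the upper bound of Theorem~\ref{thm:bmm-protocol} in this regime.

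I do not foresee any real obstacle: the construction is direct and the information-theoretic step is black-box. The only subtlety worth stating carefully is that the column-major packing respects both the dimension $n\times n$ (using $s\le n$) and the sparsity promise $|A\ast B|\le\ell$, and that since Bob's matrix $B$ is fixed and independent of any input, the reduction is genuinely from a one-sided communication task, which is exactly where the $\Omega(\ell)$ Holevo-type bound applies cleanly.
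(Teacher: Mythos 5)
Your proposal is correct, and the embedding is essentially the same as the paper's: in both cases $B$ is (a restriction of) the identity, $A$ carries Alice's $\ell$-bit string, and the point is that outputting $A\ast B$ forces Bob to learn that string. The difference is only in the final black-box step. The paper has Bob, once he knows $a$, locally compute $\mathsf{IP}_\ell(a,b)$ and cites the known $\Omega(\ell)$ quantum lower bound for inner product; you instead invoke the $\Omega(\ell)$ bound for the task ``Bob learns Alice's $\ell$-bit string'' directly. Your route cuts out the inner-product middleman (the $\mathsf{IP}$ bound is itself ultimately proved by such an information-theoretic argument), at the cost of needing the \emph{interactive} version of the Holevo-type bound: for two-way quantum protocols the naive ``mutual information grows by one bit per qubit'' accounting is not quite right (communication from Bob to Alice can build entanglement that effectively doubles the capacity of Alice's later messages), so you should cite the Nayak--Salzman theorem on interactive quantum protocols rather than gesture at ``Bob's initial mutual information with $a$ is zero''; with that citation the constant changes but the $\Omega(\ell)$ conclusion stands. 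Both approaches are clean; the paper's has the minor advantage of reducing to a named communication problem whose hardness is already stated in the literature in exactly the form needed, while yours is slightly more self-contained conceptually. Your handling of the sparsity promise and of the ``in particular'' clause matches the paper.
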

\begin{proof}
We can embed an instance $(a,b)$ of the inner product function $\mathsf{IP}_{\ell}$ in an instance of $\mathsf{BMM}_{n,n,\ell}$ as follows. Let $B=I$ be the identity matrix, and let $A$ contain the $\ell$-bit string $a$ in the first $\ell$ positions. Then Alice and Bob jointly compute $AB=A$, and Bob can compute $\mathsf{IP}(a,b)$ and send the resulting bit to Alice. Since $Q(\mathsf{IP}_{\ell})=\Omega(\ell)$, we have $Q(\mathsf{BMM}_{n,n,\ell})=\Omega(\ell)$. 
\end{proof}

\begin{proposition}\label{prop:bmm-or-lemma}
Suppose computing the entrywise-OR of $k$ independent instances of $\mathsf{BMM}_{n,n,n^2}$ has bounded error quantum communication complexity $\Omega(\sqrt{k}Q(\mathsf{BMM}_{n,n,n^2}))$. Then for any $\ell\in [n^2]$, $Q(\mathsf{BMM}_{n,n,\ell})=\Omega(\ell^{3/4}\sqrt{n})$.
\end{proposition}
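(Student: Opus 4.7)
The plan is to reduce the entrywise-OR of $k$ independent instances of $\mathsf{BMM}_{n',n',n'^2}$ to a single instance of $\mathsf{BMM}_{n,n,\ell}$, for $n' := \lfloor\sqrt{\ell}\rfloor$ and $k := \lfloor n/n'\rfloor$, and then invoke both the assumed OR lemma and the lower bound $Q(\mathsf{BMM}_{n',n',n'^2}) = \Omega(n'^2)$ that we already have from Proposition~\ref{prop:bmm-lower-bound2}.

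The reduction is the standard block-concatenation trick for Boolean matrix products. Given inputs $(A^{(i)}, B^{(i)})_{i=1}^k$ with each $A^{(i)}, B^{(i)} \in \{0,1\}^{n' \times n'}$, Alice horizontally concatenates her matrices into an $n' \times kn'$ matrix $\tilde A = [A^{(1)} \mid \cdots \mid A^{(k)}]$, and Bob vertically concatenates his into a $kn' \times n'$ matrix $\tilde B$ whose $i$-th block of rows is $B^{(i)}$. A direct check gives $\tilde A \ast \tilde B = \bigvee_{i=1}^k A^{(i)} \ast B^{(i)}$, which is exactly the entrywise-OR we want to compute. To fit this into the \emph{square} problem $\mathsf{BMM}_{n,n,\ell}$, I pad $\tilde A$ with $n - n'$ all-zero rows at the bottom and $\tilde B$ with $n - kn' \geq 0$ all-zero rows at the bottom and $n - n'$ all-zero columns on the right, producing $A \in \{0,1\}^{n \times n}$ and $B \in \{0,1\}^{n \times n}$ whose product agrees with $\tilde A \ast \tilde B$ in the top-left $n' \times n'$ block and is zero elsewhere.

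Two sanity checks are needed for the reduction to land in $\mathsf{BMM}_{n,n,\ell}$: first, that the output has at most $\ell$ ones, which holds because all nonzero entries sit in an $n' \times n'$ block, so $|A\ast B| \leq n'^2 \leq \ell$ by the choice of $n'$; and second, that the recovered matrix $A \ast B$ really determines the OR of the $k$ instances, which is immediate since the top-left $n' \times n'$ block is exactly that OR. Hence any quantum protocol for $\mathsf{BMM}_{n,n,\ell}$ solves the entrywise-OR of $k$ instances of $\mathsf{BMM}_{n',n',n'^2}$ with no communication overhead. Applying the hypothesized OR lemma together with Proposition~\ref{prop:bmm-lower-bound2} yields
\begin{equation*}
Q(\mathsf{BMM}_{n,n,\ell}) \;=\; \Omega\!\left(\sqrt{k}\, Q(\mathsf{BMM}_{n',n',n'^2})\right) \;=\; \Omega\!\left(\sqrt{k}\cdot n'^{2}\right) \;=\; \Omega\!\left(\sqrt{n/\sqrt{\ell}}\cdot \ell\right) \;=\; \Omega\!\left(\sqrt{n}\,\ell^{3/4}\right),
\end{equation*}
as claimed, after noting $k \geq n/(2\sqrt{\ell})$ for the relevant range.

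I do not expect any serious obstacle here beyond bookkeeping: the integrality of $\sqrt{\ell}$ and $n/\sqrt{\ell}$ can be absorbed into the constants hidden by the $\Omega(\cdot)$, and the two extreme regimes ($\ell = O(1)$, giving $k = \Theta(n)$, $n' = \Theta(1)$; and $\ell = \Theta(n^2)$, giving $k = \Theta(1)$, $n' = \Theta(n)$) both recover the already-known bounds from Propositions~\ref{prop:bmm-lower-bound1} and~\ref{prop:bmm-lower-bound2}, which is a useful consistency check. The genuine content that the OR-lemma hypothesis is supplying is precisely what interpolates between these two endpoints and gives the $\ell^{3/4}\sqrt{n}$ behavior throughout.
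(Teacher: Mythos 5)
Your proof is correct and is essentially identical to the paper's own argument: the same horizontal/vertical block-concatenation of $k=n/\sqrt{\ell}$ instances of $\mathsf{BMM}_{\sqrt{\ell},\sqrt{\ell},\ell}$ padded with zeros into an $n\times n$ instance, followed by the assumed OR lemma combined with the $\Omega(\ell)$ bound of Proposition~\ref{prop:bmm-lower-bound2} applied at size $\sqrt{\ell}$. Your extra care with floors and the consistency check at the extreme values of $\ell$ are fine but not substantively different from what the paper does.
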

\begin{proof}
Let $(A_1,B_1),\dots, (A_k,B_k)$ be independent instances of $\mathsf{BMM}_{\sqrt{\ell},\sqrt{\ell},\ell}$, for $k=\frac{n}{\sqrt{\ell}}$. Define $A$ and $B$ as follows:
$$A=\left[\begin{array}{ccc}
A_1 & \dots & A_k\\
0 & \dots & 0\\
\vdots & \ddots & \vdots \\
0 & \dots & 0 \end{array}\right],\qquad B=\left[\begin{array}{cccc}
B_1 & 0 & \dots & 0\\
\vdots & \vdots & \ddots & \vdots \\
B_k & 0 & \dots & 0 \end{array}\right].$$
Then $A\ast B$ has $\bigvee_{i=1}^k A_i\ast B_i$ in the top left corner, and zeros elsewhere. So $|A\ast B|\leq \ell$, and computing $A\ast B$ costs at least $\sqrt{k}Q(\mathsf{BMM}_{\sqrt{\ell},\sqrt{\ell},\ell})\geq \sqrt{\frac{n}{\sqrt{\ell}}}\ell=\ell^{3/4}\sqrt{n}$. 
\end{proof}

The above proposition actually holds equally true for the non-promise problem ${\sf BMM}_{m,n}={\sf BMM}_{m,n,m^2}$, and would imply $Q({\sf BMM}_{m,n})=\Omega(m^{3/2}\sqrt{n})$. 

\section*{Acknowledgments}
The authors are grateful to Troy Lee and Ronald de Wolf for helpful discussions. FLG is supported by the Grant-in-Aid for Young Scientists~(A)~No.~16H05853 and the Grant-in-Aid for Scientific Research~(A)~No.~16H01705 of the Japan Society for the Promotion of Science, and the Grant-in-Aid for Scientific Research on Innovative Areas~No.~24106009 of the Ministry of Education, Culture, Sports, Science and Technology in Japan. SJ is supported by the Institute for Quantum Information and Matter, 
an NSF Physics Frontiers Center (NFS Grant PHY-1125565) with support of the Gordon and Betty Moore Foundation (GBMF-12500028).

\small
\bibliographystyle{alpha}
\bibliography{bmm}

\end{document}